\documentclass{article}
\usepackage{amsmath,amssymb,textcomp}
\usepackage{graphicx,mathrsfs}
\usepackage{multirow}
\usepackage{xcolor}




\newtheorem{theorem}{Theorem}

\newtheorem{definition}[theorem]{Definition}
\newtheorem{example}[theorem]{Example}

\newtheorem{proposition}[theorem]{Proposition}

\newenvironment{proof}[1][Proof]{\textbf{#1.} }{\ \rule{0.5em}{0.5em}}


\begin{document}

\title{From the Heisenberg to the Schr\"{o}dinger Picture:\\Quantum Stochastic Processes and Process Tensors\thanks{To appear in Proceedings of the 60th IEEE Conference on Decision and Control (CDC), Dec. 13-15, 2021}}

\author{Hendra I. Nurdin\thanks{H. I. Nurdin is with the School of Electrical Engineering and 
Telecommunications,  UNSW Australia,  Sydney NSW 2052, Australia (\texttt{email: h.nurdin@unsw.edu.au})} \and John Gough  \thanks{J. Gough is with the Department of Physics, Aberystwyth University, Ceredigion, SY23 3BZ, Wales, UK (\texttt{email: jug@aber.ac.uk}). JG acknowledges funding under ANR grant (ANR-19-CE48-0003)} }

\date{}

\maketitle

\begin{abstract}
A general theory of quantum stochastic processes was formulated by Accardi, Frigerio and Lewis in 1982 within the operator-algebraic framework of quantum probability theory, as a non-commutative extension of the Kolmogorovian classical stochastic processes. More recently, studies on non-Markovian quantum processes have led to the  discrete-time process tensor formalism in the Schr\"{o}dinger picture to describe the outcomes of sequential interventions on open quantum systems. However, there has been no treatment of the relationship of the process tensor formalism to the quantum probabilistic theory of quantum stochastic processes. This paper gives an exposition of quantum stochastic processes and the process tensor and the relationship between them. In particular, it is shown how the latter emerges from the former via extended correlation kernels incorporating ancillas.  
\end{abstract}

\section{Introduction}
\label{sec:intro}
Modern probability theory as formulated by Kolmogorov \cite{Kolm33} underpins the theory of stochastic processes, stochastic systems and stochastic control \cite{WIlliams91,WH85}. Similarly, beginning with the seminal work of von Neumann on the axiomatization of quantum mechanics \cite{VN18}, quantum probability theory has emerged as a non-commutative generalization of probability theory \cite{Streat00}. It provides a natural setting for a theory of quantum stochastic processes as a non-commutative generalization of the classical theory of Kolmogorov. A major departure of the quantum setting from the classical one is that the random outcomes of sequential measurements on quantum stochastic processes do not in general satisfy the Kolmogorov consistency conditions and hence cannot be described as classical stochastic processes with well-defined sample paths. 

A general quantum probabilistic theory of quantum stochastic processes  was introduced in the seminal work of Accardi, Frigerio and Lewis (AFL) \cite{AFL82}. The formulation is given in the  Heisenberg picture, generalizing the Kolmogorovian theory of classical stochastic processes. This is in the sense that observables as quantum random variables evolve with time while the state of the system is kept fixed, just as how random variables evolve in time in a classical stochastic process while the probability measure on the underlying classical probability space remains fixed. Quantum stochastic processes as operator-valued processes are defined independently of  single or sequential measurements that may be performed on the process at any time. However, measurements and their probabilistic outcomes are accounted for by the correlation kernels of quantum stochastic processes, playing  a similar role to the family of finite-dimensional distributions for classical stochastic processes. 


Recent efforts in trying to understand and characterize temporal quantum noise in engineered quantum systems have led to the consideration of alternative formalisms for defining, describing and witnessing non-Markovian quantum processes, see, e.g., \cite{RHP14}. Unlike \cite{AFL82}, these formalisms are developed in the Schr\"{o}dinger picture, with an emphasis on the transformations of states (described by a density operator) of the system of interest. The process tensor formalism was introduced in \cite{PRRFPM18} to overcome the limitations of conventional descriptions of non-Markovian dynamics based on  the reduced dynamics of the system in the Schr\"{o}dinger picture (as linear transformations of the system's states). This includes addressing initial system-environment correlation  and multi-time interventions.

However,  there have  been so far no studies to reconcile  the process tensor formalism to the well-established AFL theory and its subsequent developments. Since both formalisms are concerned with related objects and the same physics but set in different pictures (Heisenberg vs Schr\"{o}dinger), one would expect a close relationship between the two. In this paper, we connect the process tensor to AFL theory through the correlation kernels of quantum stochastic processes. In particular, it is shown how process tensors can be recovered from extended correlation kernels incorporating ancillas. Along the way, we also give a tutorial style overview on quantum stochastic processes, multi-time correlations and sequential measurements on quantum systems. In particular, we highlight subtle points surrounding multi-time correlations and sequential measurements, emphasizing the latter's departure from Kolmogorovian classical stochastic processes.


\noindent \textbf{Notation.} $\mathbb{R}$ and $\mathbb{C}$ denote the real and complex numbers, respectively. For $c \in \mathbb{C}$, $\overline{c}$ is its complex conjugate. For a complex-valued function $X$, $\overline{X}(\cdot) = \overline{X(\cdot)}$. For a set $S$, $S^n$ denotes the $n$-fold direct product $S^n =\underbrace{S \times S \times \cdots S}_{\hbox{$n$ times}}$.  The notation $\otimes$ denotes the tensor product of Hilbert spaces and the algebraic tensor product of linear operators. A Dirac ket  $|x\rangle$ denotes a complex  vector in a Hilbert space  while a bra $\langle x |$  denotes the conjugate transpose (or dual functional) of the  vector. Thus $\langle x|y \rangle$ is the inner product of $|x\rangle$ and $|y \rangle$.  For any operator $X$ mapping a Hilbert space to another, $X^{*}$ denotes the adjoint of $X$ and  $\mathrm{tr}(X)$ denotes the trace of a trace-class operator.  $X^{\top}$ denotes the transpose of a matrix $X$. $\mathrm{B}(\mathfrak{h})$ and $\mathcal{S}(\mathfrak{h})$  denote the complex space of all bounded operators and the  convex cone of all unnormalised density operators over a Hilbert space $\mathfrak{h}$, respectively. For a set of distinct  numbers $t_1, t_2,\cdots,t_n \in \mathbb{R}$, a time tuple is the  $n$-tuple $\mathbf{t}_n=(t_1,t_2,\ldots,t_n)$. Non-strict set inclusion is denoted by $\subseteq$ while strict inclusion is denoted by $\subset$.  The composition operation is denoted by $\circ$.

\section{Quantum stochastic processes}
\label{sec:qsp}

A quantum probability space is a pair $(\mathscr{X},\mu)$ where $\mathscr{X}$ is a von Neumann algebra of bounded operators on some Hilbert space $\mathfrak{h}$ (containing the identity operator $I_{\mathscr{X}}$) and and $\mu$ is a unital normal state on $\mathscr{X}$ (unital meaning $\mu(I_{\mathscr{X}})=1$). Recall that a von Neumann algebra is a *-algebra of operators equipped with addition, multiplication (as composition of operators) and involution $^*$ (defined as the adjoint of the operator) and is closed with respect to the normal topology of sub-algebras of $\mathrm{B}(\mathfrak{h})$. For normal states $\mu$, there exists a density operator $\rho$ on $\mathfrak{h}$ such that $\mu(X) = {\rm tr}(\rho X)$ for all $X \in \mathscr{X}$; see \cite{BvHJ07} and the references therein. We will also write $\mu(\cdot)$ using the quantum expectation notation $\langle \cdot \rangle$. 

A classical probability space $(\Omega,\mathcal{F},P)$ can be viewed as a Banach algebra $L^{\infty}(\Omega,\mathcal{F},P)$ of essentially bounded random variables on $(\Omega,\mathcal{F},P)$. A  quantum probability space $(\mathscr{C},\mu)$, with $\mathscr{C}$ commutative\footnote{Meaning that the elements of $\mathcal{C}$ are commuting with one another.} and $\mu$ a unital normal state, is *-isomorphic to a classical probability space $(L^{\infty}(\Omega,\mathcal{F},\nu),\mathbb{E})$, where the expectation operator $\mathbb{E}(X)=\int_{\Omega} X(\omega) P(d\omega)$ for some measure $P$ that is absolutely continues with respect to $\nu$.  This *-isormorphism is a bijective map $\iota: (\mathscr{C},\mu) \rightarrow (L^{\infty}(\Omega,\mathcal{F},\nu),\mathbb{E})$ with the properties $\iota(ab) = \iota(a) \iota(b)$ and $\iota(b^*) = \overline{\iota(b)}$ for any $a,b \in (\mathscr{C},\mu)$. The *-isomorphism be tween a classical probability space and commutative von Neumann algebra is known as the {\em Spectral Theorem}, see, e.g., \cite[Theorem 3.3]{BvHJ07}.


The physical interpretation of the quantum probability space $(\mathscr{X},\mu)$ is as follows. The underlying Hilbert space of the operators in  $\mathscr{X}$ is the Hilbert space of an associated quantum mechanical system. Observables of the system are  the self-adjoint operators in $\mathscr{X}$, which are also quantum random variables (i.e., quantum analogues of real-valued random variables). The quantum expectation of an observable $X$ is given by $\mu(X)$.  Events $E \in \mathscr{X}$ are projection operators  $(E=E^*=E^2)$ and the probability of an event $E$ is given by $\mu(E)$. Only commuting events $E$ can have a joint probability distribution that satisfy the Kolmogorov consistency conditions, non-commuting events cannot be assigned a joint probability distribution. This can be seen as a direct consequence of the Spectral Theorem: since non-commuting events form the elements of a non-commutative algebra it cannot be mapped to a classical probability space.

Let $T \subseteq \mathbb{R}$. A quantum stochastic process over a von Neumann algebra $\mathscr{B} \subseteq \mathrm{B}(\mathfrak{h})$ is a triplet $(\mathscr{A},\{j_t\}_{t \in T},\mu)$, with $\mathscr{A}$ another von Neumann algebra of operators, possibly over another Hilbert space $\mathfrak{k}$, $\mu$ a normal state on $\mathscr{A}$, and $j_t: \mathscr{B} \rightarrow \mathscr{A}$ $\forall t \in T$ is a *-homomorphism  from $\mathscr{B}$ to $\mathscr{A}$, $j_t(XY)=j_t(X)j_t(Y)$ and $j_t(X^*)=j_t(X)^*$ for any $X,Y \in \mathscr{B}$.   Note that $(\mathscr{A},\mu)$ is a quantum probability space.

Since  a collection of non-commuting random variables will not have a joint probability distribution, for quantum stochastic processes  one considers the more general notion of {\em correlation kernels}. For any positive integer $n$, given time  tuple $\mathbf{t}_n \in T^n$, and vectors $\mathbf{a}_n =(a_1,\ldots,a_n)^{\top}  \in \mathscr{B}^n$ and $\mathbf{b}_n =(b_1,\ldots,b_n)^{\top} \in \mathscr{B}^n$, correlation kernels $w_{\mathbf{t}_n}$  are complex functions on $\mathscr{B}^n \times \mathscr{B}^n$ of the form
\begin{align}
w_{\mathbf{t}_n}(\mathbf{a}_n,\mathbf{b}_n) = \mu(j_{\mathbf{t}_n}(\mathbf{a}_n)^* j_{\mathbf{t}_n}(\mathbf{b}_n)), \label{eq:correlation-kernel}
\end{align}  
where $j_{\mathbf{t}_n}(\mathbf{a}_n)=j_{t_n}(a_n) j_{t_{n-1}}(a_{n-1}) \ldots j_{t_1}(a_1)$. For properties of correlation kernels, see \cite[Proposition 1.2]{AFL82}. If $\mathbf{a}_n=\mathbf{b}_n=\mathbf{E}_n$, where the elements $E_j$, $j=1,\ldots,n$, of $\mathbf{E}_n$ are mutually commuting events (projection operators) in $\mathscr{B}$ then $w_{\mathbf{t}_n}(j_{\mathbf{t}_n}(\mathbf{E}_n)^* j_{\mathbf{t}_n}(\mathbf{E}_n))$ gives the joint probability distribution of these events. Otherwise, it gives the probability for the events to occur in that specific time order.  For the remainder of the paper, for concreteness we take $\mathfrak{h}$ to be embeddable as a subspace of $\mathfrak{k}$, $\mathscr{A} \subseteq \mathrm{B}(\mathfrak{k})$ and $\mathscr{B}$ to be embeddable as a sub-algebra of $\mathscr{A}$.  We consider $j_t$ of the form $j_t(\cdot) = U_t^* (\cdot) U_t$, with $U_t$ a unitary operator on $\mathfrak{h}$ and define $j_t^{\star}$ via $j_t^{\star}(\cdot) = U_t (\cdot) U_t^*$. Note that by definition, $j_t^{\star}$ is the essentially the inverse of $j_t$, in the sense that $j_t \circ j_t^{\star} = j_t^{\star} \circ j_t =I$, where $I$ is an identity map. We also define $j_{t_1,t_2}(\cdot) = j_{t_2} \circ j_{t_1}^{\star}(\cdot) = U_{t_2}^*U_{t_1} (\cdot) U_{t_1}^*U_{t_2}$ and in a similar fashion define $j_{t_1,t_2}^{\star} = j_{t_1} \circ j_{t_2}^{\star}$.

\section{Multi-time correlations}
\label{sec:multi-time}

In many cases, we can consider experiments where several measurements are
made at specific times over a time interval.  

\begin{definition}
A family of experiments on a particular system is said to be of order $n$ if
each experiment consists of nontrivial measurements made at $n$ distinct
times $t_{1},\cdots ,t_{n}$. The family is said to be \textit{complete} if we make
all possible experiments exhausting everything we could measure and covering
the time interval. For $n>1$ we say that the experiments are multi-time
experiments.
\end{definition}

We insert the requirement of nontriviality to ensure that we have a hierarchy - an order $n$ experiment is not a special case of a higher order experiment. Incompatible observations being made at the same time in the same trial are precluded. Of course, the family itself can include incompatible measurements. However, we may make incompatible measurements within the same experiment so long as they are made at different times. The notion of completeness just means that we do as many measurements as possible, restricted to $n$ distinct times within the interval of interest. The aim is to be exhaustive in what is measured. 

In an order 1 experiment, each trial involves measuring the system at a single time. In an order 2 family the experimenter measures observables at different times \textit{in the same trial} and thereby obtains the two-point statistical correlations between
different quantities at different times - something not available from the
data collected in an order 1 experiment. An order $n$ experiment contains information not available from
lower order experiments.

In both classical and quantum theories, order 1 measurements lead to the notion of an ``instantaneous state''. For instance, in quantum mechanics we obtain empirically from order one experiments the expectations $\langle X(t) \rangle$: this should take the form $\mathrm{tr}(\rho_t X)$ and by varying the observable $X$ measured we determine a density matrix $\rho_t$. As such, a complete family of order 1 experiments over the time interval reveals the set of ``instantaneous states'', but this is still only partial information.
We now focus on what we can obtain empirically from a family of experiments of order $n$.

\label{prop:multi-time}
In quantum theory, the most general multi-time correlation that can be estimated from experiment are those of the form \cite{AFL82} 
\begin{eqnarray}
\langle X_{1}\left( t_{1}\right) ^{\ast }\cdots X_{n}\left( t_{n}\right)
^{\ast }Y_{n}\left( t_{n}\right) \cdots Y_{1}\left( t_{1}\right) \rangle 
\label{eq:pyramid}
\end{eqnarray}
where the times are ordered as $0\leq t_{1}<t_{2}<\cdots <t_{n}\leq T,$ and the operators are all in the Heisenberg picture at the indicated times. We refer to these as \textit{pyramidal time-ordered correlations}. The essential feature is that we have increasing times as we work in from
the outermost operators into the centre.
For instance, suppose an order $n$ experiment involves measuring a $k$th observable at time $t_k$ and that this results in an answer $\omega_k$. Let $Q_k (\omega )$ give the corresponding projection in the Heisenberg picture. 
(For yes/no experiments we have $Q_k (\text{yes}) =P_k (t_k)$ and $Q_k (\text{no}) = I-P_k (t_k) $.) From the experiment we may estimate the empirical probabilities
\begin{eqnarray*}
\lefteqn{p _n (\omega_1, \cdots , \omega _n ) }\\
&=&  \langle Q_{1}\left( \omega_{1}\right) \cdots
Q_{n}\left( \omega_{n}\right) Q_{n}\left( \omega_{n}\right) \cdots Q_{1}\left(
\omega_{1}\right) \rangle 
\end{eqnarray*}
where $\omega_k$ is the answer to the $k$ measurement at time $t_k$. From the fact that $\sum_\omega Q_k (\omega) =I$, we find
\begin{eqnarray*}
\sum_{\omega_n} p_n (\omega_1, \cdots , \omega _n ) =  p_{n-1} (\omega_1, \cdots , \omega _{n-1} ) .
\end{eqnarray*}
This may be  rephrased as follows.

\begin{proposition}
We may reduce an order $n$ experiment to an order $n-1$ experiment by ignoring the last measurement in time.
\end{proposition}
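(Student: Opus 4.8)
The plan is to show that the proposition is nothing more than the marginalization identity $\sum_{\omega_n} p_n(\omega_1, \cdots, \omega_n) = p_{n-1}(\omega_1, \cdots, \omega_{n-1})$ already displayed above the statement, now read as an assertion about experiments. ``Ignoring the last measurement in time'' means that we discard the outcome $\omega_n$ recorded at the latest time $t_n$ and keep only the partial record $(\omega_1, \cdots, \omega_{n-1})$; the probability induced on this shorter record is obtained by summing the joint distribution $p_n$ over all values of the discarded outcome $\omega_n$. The task therefore reduces to verifying that this marginal coincides with the order $n-1$ distribution $p_{n-1}$.

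First I would write $p_n$ in its pyramidal time-ordered form and note that, because $t_1 < \cdots < t_n$, the latest-time projections $Q_n(\omega_n)$ occupy the innermost slot, flanking the centre of the pyramid:
\[
p_n(\omega_1, \cdots, \omega_n) = \langle Q_1(\omega_1) \cdots Q_{n-1}(\omega_{n-1})\, Q_n(\omega_n) Q_n(\omega_n)\, Q_{n-1}(\omega_{n-1}) \cdots Q_1(\omega_1) \rangle.
\]
Summing over $\omega_n$ and using linearity of the state $\langle \cdot \rangle$, the only factor affected is the central pair $\sum_{\omega_n} Q_n(\omega_n) Q_n(\omega_n)$. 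Two elementary facts collapse this pair: each $Q_n(\omega_n)$ is a projection, so $Q_n(\omega_n)^2 = Q_n(\omega_n)$, and the projections form a resolution of the identity, $\sum_{\omega_n} Q_n(\omega_n) = I$. Hence the central pair sums to $I$, the innermost layer disappears, and what remains is exactly $\langle Q_1(\omega_1) \cdots Q_{n-1}(\omega_{n-1}) Q_{n-1}(\omega_{n-1}) \cdots Q_1(\omega_1) \rangle = p_{n-1}(\omega_1, \cdots, \omega_{n-1})$, as claimed.

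I do not expect a genuine computational obstacle; the content is a one-line manipulation. The point worth getting right is \emph{conceptual}: it is only because the latest time occupies the innermost position of the pyramid that the idempotency-plus-completeness collapse applies cleanly. The same move fails for an interior-time outcome, whose projections are sandwiched between operators at both earlier and later times and therefore cannot be pulled out by $\sum_{\omega} Q(\omega) = I$. This asymmetry is precisely the quantum departure from Kolmogorov consistency flagged in the introduction: marginalizing the final record is always consistent, whereas marginalizing an interior record generally is not.
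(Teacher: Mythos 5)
Your proof is correct and is essentially the paper's own argument: the paper derives the identity $\sum_{\omega_n} p_n(\omega_1,\cdots,\omega_n) = p_{n-1}(\omega_1,\cdots,\omega_{n-1})$ from the pyramidal form of $p_n$ using $\sum_\omega Q_n(\omega) = I$ (with idempotence of the projections implicit), exactly as you do, and then states the proposition as a rephrasing of that identity. Your closing remark on why the collapse works only for the innermost (latest-time) slot also matches the paper's observation that consistency holds only in the last argument.
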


However, the projections at different times are not assumed to commute with each other. As a result, the finite dimensional distributions need not satisfy Kolmogorov's consistency conditions in any of the arguments, other than the very last one. This is a key feature of quantum theory and is the basis for results such as Bell's Theorem. As this can be misunderstood and lead to erroneous results or conclusions, such as when statistical inference methods based on the existence of joint distributions are applied to the outcomes of non-commuting sequential measurements, it will be revisited in more detail in the next section. 


\section{Sequential measurements and their subtleties}
\label{sec:sequential}

It follows from Section \ref{sec:multi-time} that  the correlation kernels $w_{\mathbf{t}_n}$ as defined in \eqref{eq:correlation-kernel} are intimately related to sequential measurements on quantum stochastic processes. In this section we explicitly illustrate the subtleties of these measurements, which can result in a sequence of random outcomes that fail the Kolmogorov consistency conditions and are therefore not classical stochastic processes. 

For simplicity of discussion, consider a discrete-valued observable $X_j \in \mathscr{B}$ (i.e., $X$ has a most a countable number of  eigenvalues) with all eigenvalues distinct.  If a measurement of $X_j$ is made at time $t_j$, the random outcome $m_j$ of the measurement will correspond to the application of a projection operator $P_{m_j} \in \mathscr{B}$ corresponding to the eigenvalue $\lambda_{m_j}$ of $X_j$ that is observed. The probability of sequentially observing the outcomes $m_1,m_2,\ldots,m_n$ at times $t_1 < t_2 <\ldots< t_n$ {\em in this order} under the evolution of the quantum stochastic process  is given by:
\begin{align*}
\lefteqn{P_{\mathbf{t}_n}(m_1,\ldots,m_n)}\\
&=\mathrm{tr}(\rho j_{t_1}(P_{m_1})^* j_{t_2}(P_{m_2})^* \cdots j_{t_n}(P_{m_n})^*j_{t_n}(P_{m_n})\\
&\quad \cdots j_{t_2}(P_{m_2}) j_{t_1}(P_{m_1})),\\ 
& =\mathrm{tr}(j_{t_n}^{\star}(P_{m_n} j_{t_{n-1},t_n}^{\star}( \cdots (P_{m_2}j_{t_1,t_2}^{\star} (P_{m_1} j_{t_1}^{\star}(\rho) P_{m_1}) P_{m_2}) \\
&\qquad \cdots )P_{m_n})).
\end{align*}
Caution is now due. In the quantum context, marginalization over any of the variables $m_j$ for any $j<n$ does not in general hold  except over the last one at time $t_n$ (a violation of the Kolmogorov consistency conditions). That is, in general
\begin{align}
\lefteqn{\sum_{m_k} P_{\mathbf{t}_n}(m_1,\ldots,m_n)} \notag\\
&\qquad \neq P_{\mathbf{t}_n\backslash t_k}(m_1,\ldots, \widehat{m_k}, \cdots,m_n),\; \forall k < n, \label{eq:marginalization}
\end{align}
where a hat ($\,\widehat{\cdot}\,$) above a variable indicates that the variable is dropped from the list of arguments. In the following, to emphasize this we give some simple but explicit examples. We mention that the general relationship \eqref{eq:marginalization} is the basis for violation of the Leggett-Garg inequalities \cite{ELN13} in sequential measurements in quantum mechanics, which is essentially a statement about the failure of the Kolmogorov consistency conditions \cite[\S 7 and Eq. (8.5)]{FL60}. For further discussions on these issues, we refer to \cite{Gough20,MSPM20}. Measurements that  satisfy $[j_{t_k}(P_{m_k}),j_{t_l}(P_{m_l})] =0 $ for all $k,l$ are referred to as {\em quantum non-demolition (QND) measurements}. QND measurements produce a classical stochastic process with a well-defined joint probability distribution for any collection of sampled points from the process.
\begin{example}
Take a qubit with Hilbert space $\mathfrak{h}=\mathbb{C}^2$ and the simple hypothetical situation where the evolution is frozen between measurements (i.e., $j_t =I$ for all $t \geq 0$). We take as basis vectors $|0\rangle =(0,1)^{\top}$ and $|1 \rangle=(1,0)^{\top}$. We analyze the sequential measurements of the Pauli X operator $X=\left[\begin{array}{cc} 0 & 1\\ 1 & 0 \end{array}\right]$ at time $t_1$ and $Z=\left[\begin{array}{cc} 1 & 0 \\ 0 & -1\end{array}\right]$ at a later time $t_2>t_1$. We consider the measurements of $X$ followed by $Z$  to show the inconsistencies that arise. Suppose that the qubit is initialised in the state $|\psi \rangle$. The probability of observing a measurement of $Z$ giving $i_1 = -1 $ followed by a measurement of $X$ giving  $i_2 =1$ is
\begin{align*}
P(\hbox{$i_1=-1$ then $i_2=1$}) &=  |\langle 0 | \psi\rangle|^2 \left| \frac{1}{\sqrt{2}}(\langle 0| - \langle 1|) | 0 \rangle \right|^2\\
&= \frac{1}{2}|\langle 0 |\psi \rangle|^2
\end{align*}
Similarly, the probability of observing a measurement of $Z$ giving a value $i_1 = 1 $ followed by a measurement of $X$ giving a value $i_2 =1$ is
 \begin{align*}
P(\hbox{$i_1=1$ then $i_2=1$}) &=  |\langle 1 | \psi \rangle|^2 \left| \frac{1}{\sqrt{2}}(|\langle 0|  - \langle 1|) | 1\rangle \right|^2\\
&= \frac{1}{2}|\langle 1 | \psi\rangle|^2
\end{align*}
So, that marginalising over $i_1$ gives:
\begin{align*}
\sum_{x={-1,1}} P(\hbox{$i_1=x$ then $i_2=1$})  &= \frac{1}{2}|\langle 0 | \psi \rangle|^2 + \frac{1}{2}|\langle 1  | \psi \rangle|^2\\
&=1.
\end{align*}
On the other hand, if we do not measure $Z$ at $t_1$ and only measure $X$ in the state $|\psi\rangle$ at time $t_2$ then we get:
 \begin{align*}
P(\hbox{$i_2=1$}) &= \frac{1}{2}|(\langle  0| - \langle 1| ) |\psi \rangle |^2.
\end{align*}
Thus we see that in general, marginalizing $i_1$ leads to inconsistency with a measurement of $Z$ only at $t_2$:
\begin{align*}
\sum_{x={-1,1}} P(\hbox{$i_1=1$ then $i_2=1$})  &\neq   \frac{1}{2}|(\langle  0| - \langle 1| ) |\psi \rangle |^2,
\end{align*}
except in the special case when $|\psi\rangle =  \frac{1}{\sqrt{2}}(|0\rangle - |1\rangle)$ so that $P(i_2=1)=1$. The reason for this is of course well understood. A measurement of $Z$ at time $t_1$ changes the quantum state and this will influence the subsequent measurement of $X$. This does not happen in a classical stochastic process, where performing a measurement does not change the probability measure underlying the process.
\end{example} 

\begin{example}
The previous example gave the sequential measurement of two non-commuting observables when the system state is frozen in between measurements. When the state is evolving, measurement of the same observable at different times may also not commute. We consider the simple qubit example again. Suppose that the qubit is initialized in the state $|\psi\rangle$ and the evolution is given by the Hamiltonian $H= \frac{1}{2}\omega Z$. We consider the measurement of $X$ at sequential times $0<t_1<t_2< \ldots < t_n$, and note that $[H,X]\neq 0$.  Let $U_t=\exp(-iHt)$. Let $i_k$ denote the outcome of measuring of $X$ at time $t_k$ and let $|i_k\rangle$ be an eigenvector of $X$ corresponding to $i_k$.  Let $P_{i_k} = |i_k\rangle \langle i_k|$. The   {\em unnormalized} state of the qubit after the $n$-th measurement is  
$$
|\psi_{t_n} \rangle = P_{i_n}U_{t_n-t_{n-1}} \cdots P_{i_{n-2}}U_{t_3-t_2}P_{i_2}U_{t_2-t_1} P_{i_1} U_{t_1}|\psi \rangle.  
$$
The probability of observing $i_k=x_k$ with $x_k\in\{-1,1\}$ is given by
\begin{align*}
\lefteqn{P(\hbox{$i_1=x_1$ then $i_2=x_2$ .... then $i_n=x_n$})} \\
&= \langle \psi_{t_n}  |\psi_{t_n} \rangle \\
&= \langle \psi | U_{t_1}^* P_{x_1} U_{t_2-t_1}^* P_{x_2} \cdots U_{t_n-t_{n-1}}^* P_{x_n} \\
&\qquad \cdots U_{t_n-t_{n-1}} \cdots  P_{x_2} U_{t_2-t_1} P_{x_1} U_{t_1} |\psi \rangle\\
&= \langle \psi | j_{t_1}(P_{x_1})j_{t_2}(P_{x_2}) \cdots j_{t_{n-1}}(P_{x_{n-1}}) j_{t_n}(P_{x_n}) \\
&\qquad \times j_{t_{n-1}}(P_{x_{n-1}}) \cdots j_{t_2}(P_{x_2}) j_{t_1}(P_{x_1}) |\psi \rangle 
\end{align*}
Let $X_t=j_t(X)$ and $Y_t=j_t(Y)$. The Heisenberg equation of motion is $\dot{X}_t = \omega Y_t$ and $\dot{Y}_t = -\omega X_t$, with initial condition $X_0=X$ and $Y_0=Y$. This has the solution $X_t = \cos(\omega t) X + \sin(\omega t)Y$ and $Y_t = -\sin(\omega t) X + \cos(\omega t)Y$ and it follows that $[X_{t_j},X_{t_k}]=\sin(\omega(t_j-t_k))[X,Y]$. For $[X_{t_j},X_{t_k}]=0$, we must have that $t_j-t_k$ must be an integer multiple of $\pi/\omega$. Since $P_{x} = \frac{1}{2}(I -\mathrm{sgn}(x)X)$, where $\mathrm{sgn}(x)$ denotes the sign of $x$,  it follows that $[j_{t_j}(P_{x_j}),j_{t_k}(P_{x_k})] =\mathrm{sgn}(x_j x_k) [X_{t_j},X_{t_k}]$. We conclude  that $[j_{t_j}(P_{x_j}),j_{t_k}(P_{x_k})] =0$ if and only if $t_j$ is of the form $t_1$ + an integer multiple of $\pi/\omega$ for all $j \geq 2$ while $t_1$ can be arbitrary. In this case, $X_{t_j}$ is either $X$ or $-X$. Also,  when the measurement is QND, given the first measurement $i_1$ at time $t_1$ (which is random) the remaining measurements $i_2,i_3,\ldots,i_n$ become deterministic for any $n >1$ since the system state can either stay at a particular eigenstate of $X$ (giving a constant sequence) or cycles in a deterministic manner between the orthogonal eigenstates of $X$. That is,  {\em the probability of observing any sequence $i_1,i_2,\ldots$ is completely determined only by the probability of observing $i_1$ alone}.

\end{example}

\section{The process tensor}
\label{sec:process-tensor}
We first motivate and introduce the notion of a discrete-time process tensor. We  start by recalling the definition of quantum operations and quantum instruments, see, e.g.,  \cite{Holevo01}.

\begin{definition}[Quantum operation]
Let $\mathfrak{h}$ be a Hilbert space. A quantum operation $O: \mathrm{B}({\mathfrak{h}}) \rightarrow  \mathrm{B}({\mathfrak{h}})$ is a linear completely positive map with the property that  $\mathrm{tr}(O\rho) \leq \mathrm{tr}(\rho)$ for all $\rho \in \mathrm{S}({\mathfrak{h}})$. \end{definition}

The set of all such quantum operations is denoted by $\mathcal{O}(\mathfrak{h})$. A special quantum operation is the ``do nothing" or identity operation $\mathrm{Id}$, defined by $\mathrm{Id}(X) = X$ for all $X \in \mathrm{B}(\mathfrak{h})$. 

\begin{definition}[Quantum instrument]
Let $\mathfrak{h}$ be a Hilbert space and $(\Omega,\mathcal{F})$ be a measurable space with $\Omega \subseteq \mathbb{R}^n$.  A quantum instrument $\mathcal{I}$ is a tuple $(\mathfrak{h},\Omega,\mathcal{F},\mathcal{M})$, where  $\mathcal{M}$ is a quantum operation valued-measure that maps elements of  the $\sigma$-algebra $\mathcal{F}$ to  $\mathcal{O}(\mathfrak{h})$, with the properties 
\begin{enumerate}
\item $\mathrm{tr}(\mathcal{M}(\Omega)\rho) = \mathrm{tr}(\rho)$ $\forall \mathcal{S}(\mathfrak{h})$.

\item For any disjoint $A_1,A_2,\ldots \in \mathcal{F}$, $\mathcal{M}(\bigcup_{k=1}^{\infty} A_k) (\rho)= \sum_{k=1}^{\infty} \mathcal{M}(A_k)(\rho)$ $\forall \rho \in \mathcal{S}(\mathfrak{h})$, where convergence is in the trace norm ($\|\cdot\|_1$) on $\mathcal{S}(\mathfrak{h})$ ($\|T\|_1=\sqrt{T^*T}$).
\end{enumerate}
The space of all such instruments is denoted by $\mathscr{I}(\mathfrak{h})$.
\end{definition}

Consider a system (labelled by a subscript $s$) with a Hilbert space $\mathfrak{h}_s$  interacting with an environment (labelled by a subscript $e$) with a Hilbert space $\mathfrak{h}_e$, and let $\mathfrak{h}_{se}=\mathfrak{h}_{s} \otimes \mathfrak{h}_{e}$. The state of the system and environment is initially in the (not necessarily factored) state $\rho_{se}$ and their joint state undergoes a joint unitary evolution between times $t_j$ and $t_{j+1}$ given by the map $\mathcal{U}_{t_j,t_{j+1}}^{se}(\cdot)=U_{t_j,t_{j+1}}^{se}  (\cdot) U_{t_j,t_{j+1}}^{se*}$, with $U^{se}_{t_j,t_{j+1}}$ unitary and $U^{se}_{t_j,t_j}=I$. At the discrete-times $0 \leq t_1 <t_2 <\ldots<t_n$ they can undergo measurements performed directly on the system, or after interacting the system (but not the environment) with some freshly prepared ancillas (i.e., ancillas that have been used are not reused on subsequent measurements) followed by measurements of compatible observables on the system and/or ancillas. This is described by  a quantum instrument $\mathcal{I}_{t_j}= (\mathfrak{h}_s,\Omega_j,\mathcal{F}_j,\mathcal{M}_j)$ at time $t_j$. Given the events $A_1,\ldots,A_{n}$ with $A_j \in \mathcal{F}_j$, the unnormalised system-environment density operator  at the time $t_n$ is given by:
\begin{equation}
\sigma_{t_n} =  \mathcal{M}_{n}(A_{n}) \circ \mathcal{U}^{se}_{t_{n-1},t_n}  \circ \cdots  \circ \mathcal{M}_{1}(A_1) \circ \mathcal{U}_{0,t_1}^{se}(\rho_{se}). 
\end{equation}

Define the map $\mathcal{T}_{\mathbf{t}_n}$ via  
\begin{align}
\lefteqn{\mathcal{T}_{\mathbf{t}_n}(\mathcal{M}_{1}(A_1),\ldots,\mathcal{M}_{n}(A_{n}))} \notag \\
&= \mathrm{tr}_{\frak{h}_e}(\mathcal{M}_{n}(A_{n})  \circ  \mathcal{U}^{se}_{t_{n-1},t_n} \circ  \cdots \circ   \mathcal{M}_{1}(A_1) \circ \mathcal{U}_{0,t_1}^{se}(\rho_{se})), \label{eq:process-tensor}
\end{align}
then we can write  
$
\sigma_{t_n} = \mathcal{T}_{\mathbf{t}_n}(\mathcal{M}_{1}(A_1), \ldots,\mathcal{M}_{n}(A_{n})).
$ 
The probability of observing the events $A_{1}, \ldots,A_{n}$ at the times $t_1,t_2,\ldots,t_n$ is given by 
$$
P_{\mathbf{t}_n}(A_{1},\ldots,A_{n})=\mathrm{tr}(\mathcal{T}_{\mathbf{t}_n}(\mathcal{M}_{1}(A_{1}),\ldots,\mathcal{M}_{n}(A_{n}))),
$$
and the system density operator $\rho_{t_n}$ at time $t_n$ is then simply the normalized version of $\mathrm{tr}_{\frak{h}_e}(\sigma_{t_n})$ given by 
$$
\rho_{t_n} = \frac{\mathrm{tr}_{\frak{h}_e}(\sigma_{t_n})}{P_{\mathbf{t}_n}(A_{1},A_{2},\ldots,A_{n})}.
$$

The map $\mathcal{T}_{\mathbf{t}_n}$ defined by \eqref{eq:process-tensor} can be viewed  as a real multilinear map from an ordered sequence of quantum operations $(O_1, O_2,\ldots,O_{n})$, corresponding to $(\mathcal{M}_1(A_1), \mathcal{M}_2(A_2),\ldots,\mathcal{M}_{n}(A_{n}))$, to an unnormalised density operator. As such, the right hand side of \eqref{eq:process-tensor} can be viewed as a real linear map on the tensor product  of quantum operations $\mathcal{O}(\mathfrak{h}_s)^{\otimes n}$, with the sequence $(O_1, \ldots,O_{n})$ being mapped to the algebraic tensor product $O_1 \otimes O_2 \otimes \cdots \otimes O_{n}$. General elements of $\mathcal{O}(\mathfrak{h}_s)^{\otimes n}$ are linear combinations of such tensor product maps and limits  thereof. They correspond to ``correlated'' measurements that involve the use of the same ancillas at different time points or the presence of correlated states between distinct ancillas at different times.   It has been shown that such maps, in the special case of finite discrete-valued measurements, have the properties  \cite{PRRFPM18,MSPM20}
\begin{enumerate}
\item[(i)] $\mathrm{tr}(\mathcal{T}_{\mathbf{t}_n}(O)) \leq 1$ $\forall O \in \mathcal{O}({\mathfrak{h}_s)}^{\otimes n}$.
\item[(ii)]  {\em Complete positivity} as a map from $\mathcal{O}(\mathfrak{h}_s)^{\otimes n}$ to $\mathrm{S}(\mathfrak{h}_{s})$. 

\item[(iii)] {\em Containment}, for any $\mathbf{s}_m \subset \mathbf{t}_n$ $(m <n)$ it holds that $\mathcal{T}_{\mathbf{s}_m}(O_{s_1} \otimes \cdots \otimes O_{s_{m}}) =\mathcal{T}_{\mathbf{t}_n}(O'_{t_1} \otimes \cdots \otimes O'_{t_{n}})$, where $O'_{t_j} = O_{t_j}$ if $t_j \in \mathbf{s}_m$, otherwise  $O_{t_j} = \mathrm{Id}$.   
\end{enumerate}

We are now ready to define the process tensor \cite{PRRFPM18,MSPM20} but stated in a more general form that allows for continuous-valued measurements. 
\begin{definition}[Process tensor]
 For  a time tuple $\mathbf{t}_n=(t_1,t_2,\ldots,t_n)$ with $0 \leq t_1 <t_2<\ldots<t_n \in T$ and a system with Hilbert space $\mathfrak{h}_s$, a process tensor $\mathcal{T}_{\mathbf{t}_n}$  is a real linear map from $\mathcal{O}(\mathfrak{h}_s)^{\otimes n}$ to  $\mathrm{S}(\mathfrak{h}_{s})$ possessing the properties (i)-(iii) stated above.   
\end{definition}

When the system Hilbert space $\mathfrak{h}_s$ is finite dimensional,  process tensors can be represented as generalized Choi many-body states and can be  cast into
a matrix-product-operator form. In this case, these representations make manipulation of process tensors convenient \cite{PRRFPM18}. 

\section{The process tensor from quantum stochastic processes}
\label{sec:pt-qsp}
We now show the relation of the process tensor to a quantum stochastic process of AFL through the correlation kernels \eqref{eq:correlation-kernel}.

Let $\mathcal{B}_s$ and $\mathcal{B}_e$ be von Neumann algebras over the system and environment Hilbert space $\mathfrak{h}_s$ and $\mathfrak{h}_e$, respectively. 
The composite space for the system is environment is the quantum probability space $ (\mathcal{B}_{se},\mu_{se})$, where $\mathcal{B}_{se} = \mathcal{B}_s \otimes \mathcal{B}_e$ and $\mu_{se}$ is a normal state on  $\mathcal{B}_{se}$. Note that the state $\mu_{se}$ is {\em not} necessarily of the factored form $\mu_s \otimes \mu_e$  for some states $\mu_s$ and $\mu_e$ on the system and environment, respectively, since  the system and environment can be initially entangled or correlated. Take the time to be $T=[0,\infty)$. We define a quantum stochastic process $\mathcal{Q}_{se}$ over $\mathcal{B}_s$ as $\mathcal{Q}_{se}=(\mathcal{B}_{se},\{j^{se}_t\}_{t\in T},\mu_{se})$.  

We now attach to $\mathcal{Q}_{se}$ an ancillary system with quantum probability space $(\mathcal{B}_{a},\mu_{a})$, where $\mathcal{B}_a$ is  a von Neumann algebra over the ancilla Hilbert space $\mathfrak{h}_a$. We then define another quantum stochastic process over $\mathcal{B}_{as}=\mathcal{B}_{a} \otimes \mathcal{B}_{s}$ as $\mathcal{Q}_{ase} = (\mathcal{B}_{ase},\{j^{ase}_t\}_{t \in T},\mu_{ase})$, where $\mathcal{B}_{ase} = \mathcal{B}_{as} \otimes \mathcal{B}_{e}$, $\mu_{ase} = \mu_{a} \otimes \mu_{se}$, and $j^{ase}_t$ acts as 
\begin{equation}
j^{ase}_t(X \otimes Y) = X \otimes j^{se}_t(Y) \in \mathcal{B}_{ase}\;, \hbox{$\forall$ $X \in \mathcal{B}_{a}$ and $Y \in \mathcal{B}_{s}$.}   
\end{equation}
That is, $j^{ase}_t$ acts non-trivially only on a factor in $\mathcal{B}_{se}$. 

 For any time tuple $\mathbf{t}_n$ with $0 \leq t_1<t_2<\ldots<t_n$, the correlation kernel $w^{ase}_{\mathbf{t}_n}$ for $\mathcal{Q}_{ase}$ is given by
 \begin{align*}
w^{ase}_{\mathbf{t}_n} (\mathbf{a}_n,\mathbf{b}_n) &= \mu_{ase}(j^{ase}_{\mathbf{t}_n}(\mathbf{a}_n)^{*}j^{ase}_{\mathbf{t}_n}(\mathbf{b}_n)),
 \end{align*} 
 where the components of $\mathbf{a}_n$ and $\mathbf{b}_n$ are operators on $\mathcal{B}_{as}$. 
 
By the polarizing identity,
\begin{equation}
X^{\ast }ZY=\frac{1}{4}\sum_{n=0}^{3}(-i)^n \left( X+e^{i\frac{n\pi}{2}}Y\right)^\ast Z\left(
X+e^{i\frac{n\pi}{2}}Y\right),  \label{eq:polarizing}
\end{equation}
it suffices to consider correlation kernels with $\mathbf{b}_n=\mathbf{a}_n$. Choose $\mathfrak{h}_a$, ancilla operators $V_{1,r_1},\ldots,V_{n,r_n}$ for $r_j=1,\ldots,\chi_j$ (with $\chi_j$ a nonnegative integer) and $j=1,\ldots,n$, and state $\mu_a$ such that $\mu_a(V_{1,r'_1}^* \cdots V_{n,r'_n}^*V_{n,r_n} \cdots V_{1,r_1}) =\mu_a(V_{1,r_1}^* \cdots V_{n,r_n}^*V_{n,r_n} \cdots V_{1,r_1}) \prod_{j=1}^n \delta_{r_j r'_j}$ for all $r_j,r'_j$, where $\delta_{jk}$ is the Kronecker delta. Let $a_j \in \mathcal{B}_{as}$ of the form $a_j = \sum_{r_j=1}^{\chi_j}    V_{j,r_j} \otimes W_{j,r_j}$, with $V_{j,r_j} \in \mathcal{B}_a$ and $W_{j,r_j} \in \mathcal{B}_s$. For this choice of $a_j$ and using the fact  $\mu_{ase}(\cdot) = \mathrm{tr}(\rho_a \otimes \rho_{se} ( \cdot))$ for some density operators $\rho_a$ on $\mathfrak{h}_a$ and $\rho_{se}$ on $\mathfrak{h}_s \otimes \mathfrak{h}_e$,  we have that,
 \begin{align*}
\lefteqn{w^{ase}_{\mathbf{t}_n}(\mathbf{a}_{n},\mathbf{a}_{n})}\\
&=\mu_{ase}(j^{ase}_{\mathbf{t}_n}(\mathbf{a}_n)^{*}j^{ase}_{\mathbf{t}_n}(\mathbf{a}_n))\\
&=\sum_{r_1=1}^{\chi_1} \cdots \sum_{r_n=1}^{\chi_n}  \mu_{ase}(j^{ase}_{t_1}(V_{1,r_1}^* \otimes W_{1,r_1}^{*}) \cdots j^{ase}_{t_n}(V_{n,r_n}^* \otimes W_{n,r_n}^{*})\\
&\quad \times j^{ase}_{t_n}(V_{n,r_n} \otimes W_{n,r_n}) \cdots  j^{ase}_{t_1}(V_{1,r_1} \otimes W_{1,r_1}) )\\
&=\sum_{r_1=1}^{\chi_1} \cdots \sum_{r_n=1}^{\chi_n}  (\mu_{a} \otimes \mu_{se})(V_{1,r_1}^* \otimes j_{t_1}^{se}(W_{1,r_1}^{*}) \cdots \\
&\quad \times V_{n,r_n}^* \otimes j^{se}_{t_n}( W_{n,r_n}^{*}) V_{n,r_n} \otimes j^{se}_{t_n}( W_{n,r_n}) \cdots V_{1,r_1} \otimes j^{se}_{t_1}( W_{1,r_1}) )\\
&=\mathrm{tr}\left( \sum_{r_1=1}^{\chi_1} \cdots \sum_{r_j=1}^{\chi_n} \alpha_{r_1,\ldots,r_n}  \mathcal{W}_{n,r_n}  \circ  j^{se \star}_{t_{n-1},t_n} \circ \cdots \circ \mathcal{W}_{2,r_2} \right.\\
&\quad \left. \circ j^{se \star}_{t_1,t_2} \circ \mathcal{W}_{1,r_1} \circ j^{se \star}_{t_1} (\rho_{se}) \vphantom{\sum_{r_1=1}^{n_1} \cdots \sum_{r_j=1}^{n_j} \alpha_{r_1,\ldots,r_n}  }\right), 
\end{align*}
where $\alpha_{r_1,\ldots,r_n} = \mu_{a}(V_{1,r_1}^* \cdots V_{n,r_n}^* V_{n,r_n} \cdots V_{1,r_1}) \geq 0$ and $\mathcal{W}_{j,r_j}: \mathcal{B}_s \rightarrow \mathcal{B}_s$ is map defined by $\mathcal{W}_{j,r_j}(\cdot) = W_{j,r_j} (\cdot) W_{j,r_j}^*$. Note that in the development above we have identified $\mathcal{W}_{j,r_j}$ with its ampliation $\mathcal{W}_{j,r_j} \otimes I$ on $\mathcal{B}_{se}$.    

Define the linear operator $\mathcal{T}^{s}_{\mathbf{t}_n}$ via, 
\begin{align*}
\lefteqn{\mathcal{T}^{s}_{\mathbf{t}_n}(\mathcal{W}_{1,r_1} \otimes \mathcal{W}_{2,r_2} \otimes \cdots \otimes \mathcal{W}_{n,r_n})}\\
 &= \mathrm{tr}_{\mathfrak{h}_e}(\mathcal{W}_{n,r_n}  \circ  j^{se\star}_{t_{n-1},t_n} \circ\cdots\circ  \mathcal{W}_{2,r_2} \circ j^{se \star}_{t_1,t_2} \circ \mathcal{W}_{1,r_1} \circ j^{se \star}_{t_1} (\rho_{se})), 
\end{align*}
and note that $\mathcal{T}^{s}_{\mathbf{t}_n}$ is defined independently of the ancilla and any of its parameters. Then we have that,
\begin{align*}
\lefteqn{w^{ase}_{\mathbf{t}_n}(\mathbf{a}_{n},\mathbf{a}_{n})}\\
&=\mathrm{tr}\left(\mathcal{T}^{s}_{\mathbf{t}_n}\left( \sum_{r_1=1}^{\chi_1} \cdots \sum_{r_n=1}^{\chi_n}  \alpha_{r_1,\ldots,r_n}\mathcal{W}_{1,r_1} \otimes \cdots \otimes \mathcal{W}_{n,r_n}\right)\right)
\end{align*}
With the complete freedom to choose $\mathfrak{h}_a$ and $\rho_a$, by taking linear combinations of $\mathcal{W}_{1,r_1} \otimes \mathcal{W}_{2,r_2} \otimes \cdots \otimes \mathcal{W}_{n,r_n}$ and   limits thereof, $\mathcal{T}^{s}_{\mathbf{t}_n}$ can be extended to a linear operator mapping from $\mathrm{CP}(\mathcal{B}_s)^{\otimes n}$ to $\mathcal{S}(\mathfrak{h}_s)$, where $\mathrm{CP}(\mathcal{B}_s)$ denotes the space of all completely positive maps on $\mathcal{B}_s$. For $\mathcal{B}_s=\mathrm{B}(\mathfrak{h}_s)$ and restricting $\mathcal{T}^{s}_{\mathbf{t}_n}$ to $\mathcal{O}(\mathfrak{h}_s)^{\otimes n}$, we  recover the process tensor from Section \ref{sec:process-tensor} but without distinguishing between correlated and uncorrelated sequential quantum operations. 


\begin{theorem}
For every correlation kernel $\omega^{se}_{\mathbf{t}_n}$ there exists a process tensor $\mathcal{T}^{s}_{\mathbf{t}_n}$ such that $\omega^{se}_{\mathbf{t}_n}(\mathbf{a}_n,\mathbf{b}_n)= \sum_{k=1}^{\ell} c_k \mathrm{tr}\left(\mathcal{T}^{s}_{\mathbf{t}_n}(\mathcal{R}_k(\mathbf{a}_n,\mathbf{b}_n)\right)$, with $\ell$ some positive integer, $c_1,\ldots,c_{\ell}$ some complex constants and $\mathcal{R}_{k}(\mathbf{a}_n,\mathbf{b}_n) \in \mathrm{CP}(\mathcal{B}_s)^{\otimes n}$,  which depends on $\mathbf{a}_n$ and $\mathbf{b}_n$, of the form
$$
\mathcal{R}_{k}(\mathbf{a}_n,\mathbf{b}_n) = \mathcal{R}_{1k} \otimes \mathcal{R}_{2k} \otimes \cdots \otimes \mathcal{R}_{nk}, 
$$
where $\mathcal{R}_{jk}(\cdot) = R_{jk}(\cdot)R_{jk}^{*}$ for some $R_{jk} \in \mathcal{B}_s$.   
\end{theorem}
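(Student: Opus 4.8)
The plan is to reduce the general off‑diagonal kernel $\omega^{se}_{\mathbf{t}_n}(\mathbf{a}_n,\mathbf{b}_n)$ to a finite linear combination of \emph{diagonal} kernels $\omega^{se}_{\mathbf{t}_n}(\mathbf{c}_n,\mathbf{c}_n)$ by means of the polarizing identity \eqref{eq:polarizing}, and then to invoke the computation carried out just before the theorem, which already expresses each diagonal kernel through the process tensor. Concretely, I would take $\mathcal{T}^{s}_{\mathbf{t}_n}$ to be exactly the map constructed above from the data $(\rho_{se},\{U_t\})$ of the process $\mathcal{Q}_{se}$, which was already argued to be a process tensor in the sense of the Definition; existence is then witnessed by this canonical choice, and what remains is to verify the claimed identity.

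First I would write the kernel in Heisenberg product form, $\omega^{se}_{\mathbf{t}_n}(\mathbf{a}_n,\mathbf{b}_n)=\mu_{se}(j^{se}_{t_1}(a_1)^{*}\cdots j^{se}_{t_n}(a_n)^{*}\,j^{se}_{t_n}(b_n)\cdots j^{se}_{t_1}(b_1))$, using $j^{se}_{t_j}(a_j)^{*}=j^{se}_{t_j}(a_j^{*})$ since each $j^{se}_t$ is a $*$‑homomorphism. I would then apply \eqref{eq:polarizing} iteratively from the centre outward. The innermost pair appears as $j^{se}_{t_n}(a_n)^{*}\,I\,j^{se}_{t_n}(b_n)$ (so $Z=I$), and because $j^{se}_{t_n}$ is linear, polarizing collapses $(a_n,b_n)$ into a single argument $c_n=a_n+e^{i m_n\pi/2}b_n$ appearing on both sides. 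With the resulting inner block now fixed, the next pair $(a_{n-1},b_{n-1})$ again has the shape $X^{*}ZY$ with $Z$ that fixed block, so \eqref{eq:polarizing} applies verbatim; iterating through $j=n-1,\ldots,1$ yields
\[
\omega^{se}_{\mathbf{t}_n}(\mathbf{a}_n,\mathbf{b}_n)=\frac{1}{4^{n}}\sum_{m_1,\ldots,m_n=0}^{3}\Big(\prod_{j=1}^{n}(-i)^{m_j}\Big)\,\omega^{se}_{\mathbf{t}_n}(\mathbf{c}_n,\mathbf{c}_n),
\]
where $c_j=a_j+e^{i m_j\pi/2}b_j$ depends on the multi‑index $(m_1,\ldots,m_n)$.

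Next I would identify each diagonal term with the process tensor. Specializing the displayed derivation preceding the theorem to the trivial ancilla ($\chi_j=1$, $V_{j,1}=I$, $W_{j,1}=c_j$, so that $\alpha\equiv\mu_a(I)=1$ and the ancilla sum collapses) gives $\omega^{se}_{\mathbf{t}_n}(\mathbf{c}_n,\mathbf{c}_n)=\mathrm{tr}\big(\mathcal{T}^{s}_{\mathbf{t}_n}(\mathcal{C}_1\otimes\cdots\otimes\mathcal{C}_n)\big)$ with $\mathcal{C}_j(\cdot)=c_j(\cdot)c_j^{*}\in\mathrm{CP}(\mathcal{B}_s)$. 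Substituting this into the polarized sum and relabelling the multi‑index $(m_1,\ldots,m_n)$ by a single index $k=1,\ldots,\ell$ with $\ell=4^{n}$, I would set $c_k=\tfrac{1}{4^{n}}\prod_{j}(-i)^{m_j}$, $R_{jk}=a_j+e^{i m_j\pi/2}b_j$, $\mathcal{R}_{jk}(\cdot)=R_{jk}(\cdot)R_{jk}^{*}$, and $\mathcal{R}_k=\mathcal{R}_{1k}\otimes\cdots\otimes\mathcal{R}_{nk}$, which is precisely the asserted form.

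The main thing to get right is the iterated polarization: one must check that at each stage the operator playing the role of $Z$ in \eqref{eq:polarizing} is genuinely held fixed while the current pair is collapsed, so that the already‑polarized inner block and the untouched outer factors simply ride along. The linearity of each $j^{se}_t$ (which gives $j^{se}_{t_j}(a_j)+e^{i m_j\pi/2}j^{se}_{t_j}(b_j)=j^{se}_{t_j}(c_j)$) together with the fact that \eqref{eq:polarizing} is an operator identity are exactly what make each substitution legitimate inside the larger product and under $\mu_{se}$. The remaining work — accumulating the coefficient $4^{-n}\prod_j(-i)^{m_j}$ and noting that each $\mathcal{C}_j$, being single‑Kraus, is a completely positive element of $\mathrm{CP}(\mathcal{B}_s)$ — is routine bookkeeping.
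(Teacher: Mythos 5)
Your proposal is correct and takes essentially the same route as the paper's own proof: the polarizing identity \eqref{eq:polarizing}, applied once per time index from the innermost pair outward, reduces the off-diagonal kernel to a linear combination of diagonal kernels $\omega^{se}_{\mathbf{t}_n}(\mathbf{c}_n,\mathbf{c}_n)$, which are then expressed as $\mathrm{tr}\left(\mathcal{T}^{s}_{\mathbf{t}_n}(\mathcal{C}_1\otimes\cdots\otimes\mathcal{C}_n)\right)$ via the computation preceding the theorem (your trivial-ancilla specialization is exactly what the paper invokes as ``a similar calculation to the above''). Your write-up simply makes explicit the bookkeeping the paper leaves implicit, namely the inside-out iteration of the polarization, the count $\ell=4^{n}$, and the coefficients $c_k=4^{-n}\prod_{j}(-i)^{m_j}$.
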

\begin{proof}
By the polarizing identity \eqref{eq:polarizing} we can write $\omega^{se}_{\mathbf{t}_n}(\mathbf{a}_n,\mathbf{b}_n)=\sum_{k=1}^{\ell} c_k \omega^{se}_{\mathbf{t}_n}(\mathbf{R}_{nk},\mathbf{R}_{nk})$ for some positive integer $\ell$ and complex constants $c_k$, where   $\mathbf{R}_{nk}=(R_{1k},\ldots,R_{nk})$ for some operators $R_{jk} \in \mathcal{B}_s$.  By a similar calculation to the above, we can then write 
\begin{align*}
w^{se}_{\mathbf{t}_n}(\mathbf{a}_{n},\mathbf{b}_{n}) &= \sum_{k=1}^{\ell} c_k \omega^{se}_{\mathbf{t}_n}(\mathbf{R}_{nk},\mathbf{R}_{nk})\\
&= \sum_{k=1}^{\ell} c_k \mathrm{tr}\left(\mathcal{T}^{s}_{\mathbf{t}_n} (\mathcal{R}_{k}(\mathbf{a}_n,\mathbf{b}_n))\right),
\end{align*}
with $\mathcal{R}_{k}(\mathbf{a}_n,\mathbf{b}_n)$ is as defined in the theorem statement. 
\end{proof}

Therefore, a correlation kernel can be evaluated by evaluating a process tensor on a strict subset of $\mathcal{O}(\mathfrak{h})^{\otimes n}$. This is because the correlation kernels $w^{se}_{\mathbf{t}_n}$ capture direct measurements performed on the system, whereas the process tensor allows general quantum operations involving ancillas. 

\section{Conclusion}
\label{sec:conclu} This paper has  given a tutorial overview of the AFL theory of quantum stochastic processes, multi-time correlations and  sequential quantum measurements,  and some subtleties associated with the latter two. We then recalled the notion of a process tensor and showed its relationship to the correlation kernels of an augmented quantum stochastic process incorporating ancillas. In particular, it was shown how process tensors can be recovered from correlation kernels. 

Following from this paper, there are further connections between the AFL theory and process tensors to be studied. For instance, the notion of quantum Markov processes has already been formulated in the AFL theory (see \cite{Nurd20} for an illustration in quantum optics) and, more recently, in the process tensor framework \cite{PRRFPM18b}. The question is whether these two notions are formally equivalent, as one may expect them to be. Also, a reconstruction theorem for quantum stochastic processes based on consistency conditions on the correlation kernels has been obtained in AFL theory while a generalized extension theorem (GET) has been proposed for process tensors \cite{MSPM20} as an adaptation of the Kolmogorov extension theorem for classical stochastic processes. How the GET is connected to the AFL reconstruction  will be investigated in a future work.  

\bibliographystyle{ieeetran}
\bibliography{qsp_pc_cdc}

\end{document}